\documentclass[conference]{IEEEtran}
\IEEEoverridecommandlockouts
\usepackage{amsmath,amsfonts}
\usepackage{algorithmic}
\usepackage{algorithm}
\usepackage{array}
\usepackage{textcomp}
\usepackage{stfloats}
\usepackage{url}
\usepackage{verbatim}
\usepackage{graphicx}
\usepackage{cite}
\usepackage{amsthm}
\usepackage{subcaption}

\usepackage{amssymb}
\usepackage{xcolor}

\newtheorem{definition}{Definition}
\newtheorem{theorem}{Theorem}
\newtheorem{lemma}[theorem]{Lemma}         

\usepackage[margin=0.72in]{geometry}
\def\BibTeX{{\rm B\kern-.05em{\sc i\kern-.025em b}\kern-.08em
    T\kern-.1667em\lower.7ex\hbox{E}\kern-.125emX}}
\begin{document}

\title{SpecFed: Accelerating Federated LLM Inference with Speculative Decoding and Compressed Transmission\\
\vspace{-2mm}
\thanks{ 
This work is supported by the National Science and Technology Major Project of China under Grant No. 2025ZD1304900,
and in part by the National Key R\&D Program of China (Grant No.2024YFE0200801, No.2024YFE0200804). The authors would also like to thank Dr.~Ke Zhang from Waseda University for his valuable assistance with the implementation of the code.
}
}

\author{\IEEEauthorblockN{
Ce Zheng\IEEEauthorrefmark{1}, Xinghan Wang\IEEEauthorrefmark{1}, Jiahong Ning\IEEEauthorrefmark{2}, Yuxuan Shi\IEEEauthorrefmark{1}, Ning Huang\IEEEauthorrefmark{1}, Tingting Yang\IEEEauthorrefmark{1}
}

\IEEEauthorblockA{\IEEEauthorrefmark{1}Department of Broadband Communication, Pengcheng Laboratory, Shenzhen 518055, China}
\IEEEauthorblockA{\IEEEauthorrefmark{2}  Dalian Maritime University, Dalian 116026, China}
\IEEEauthorblockA{Email: \{zhengc, wangxh03\}@pcl.ac.cn, jiahong.ning@mnsu.edu,\\ \{shiyx01, huangn01, yangtt\}@pcl.ac.cn}
}

\maketitle

\begin{abstract}
Federated inference enhances LLM performance in edge computing through weighted averaging of distributed model predictions. However, autoregressive LLM inference requires frequent full-model forward passes across workers, severely limiting decoding throughput. Distributed deployment further aggravates this due to a communication bottleneck: each worker must transmit full token probability distributions per draft token, dominating end-to-end latency. To address these challenges, we introduce speculative decoding to enable parallel LLM processing and propose a top-$K$ compressed transmission scheme with two server-side reconstruction strategies. We theoretically analyze the robustness of our method in terms of local reconstruction error, aggregation bias, and acceptance-rate bias, and derive corresponding bounds. Experiments demonstrate that our scheme achieves high generation fidelity while significantly reducing communication overhead.
\end{abstract}

\begin{IEEEkeywords}
federated inference, speculative decoding, LLM, top-$K$.
\end{IEEEkeywords}

\section{Introduction}
Federated inference — also known as ensemble inference — improves accuracy and robustness by aggregating predictions from models hosted on different devices via weighted averaging~\cite{shlezinger2021collaborative, malka2024decentralized, zhou2025towards, yilmaz2022over, kumazawa2024toward}. This paradigm has gained significant traction in distributed AI systems, particularly for edge computing scenarios where models reside on heterogeneous devices with varying computational capabilities. 

In the context of large language models (LLMs), federated inference frameworks have also been explored to leverage the collective power of distributed models for enhanced generation quality~\cite{yu2024breaking, chen2025harnessing, yun2025ensemble, yaodetermine}.
However, the autoregressive nature of LLM inference introduces a critical challenge in federated settings: each token generation step requires all workers to execute a full forward pass, leading to frequent synchronization and significantly reduced decoding throughput~\cite{fu2025fast}. To mitigate the inefficiency of autoregressive token generation, we consider \textit{speculative decoding}~\cite{leviathan2023fast,chen2023accelerating}, where a small “draft” model autoregressively generates a sequence of $\gamma$ candidate tokens, which are then sent to the distributed LLMs for parallel decoding and verification. By shifting part of the sequential workload to parallel processing, speculative decoding reduces the number of full LLM forward passes, thereby alleviating the latency inherent in purely autoregressive decoding while preserving generation quality.

Despite its effectiveness in reducing LLM invocation frequency, speculative decoding introduces a new challenge in federated settings: \textbf{a significant communication bottleneck}. In each round, every worker must transmit its full-token probability distribution for all draft tokens to the server for aggregation and verification. Given large vocabulary sizes (e.g., 32K or more) and high-precision representations (e.g., FP16), this results in substantial uplink traffic per worker—on the order of hundreds of kilobits per token. The total communication cost scales linearly with both the number of workers and the draft length, which can easily dominate end-to-end latency and undermine the throughput gains from parallel decoding~\cite{zhao2024edge, zhu2025efficient}.
Recent work has explored communication-efficient techniques for distributed speculative decoding, including quantization~\cite{ning2025dssd} and split verification~\cite{zhang2025quantize}. However, these approaches do not specifically address the unique challenges of federated LLM inference, where multiple heterogeneous models must collaboratively generate tokens through a weighted aggregation mechanism that requires precise distribution alignment.

In this paper, we study \textbf{federated LLM inference with speculative decoding}. Building on prior top-$K$ compression techniques~\cite{oh2025uncertainty,zheng2025communication,zheng2026fast}, we propose a communication-efficient transmission scheme that sends only the top-$K$ token probabilities per worker, along with two server-side reconstruction strategies. Our approach preserves generation fidelity while drastically cutting bandwidth usage. 
We theoretically analyze the robustness of our method in terms of local reconstruction error, aggregation bias, and acceptance-rate bias, and derive corresponding bounds, which are further validated through experiments.

The remainder of this paper is organized as follows: Section~\ref{sec:sys_model} introduces our system model. Section~\ref{sec:compressed_transmission} details our compressed transmission scheme with top-$K$ truncation and reconstruction. 
Section~\ref{sec:theory} presents our theoretical analysis of token distribution distortion and acceptance rate variation.
Finally, Section~\ref{sec:conclusion} concludes the paper.

\section{System Model}
\label{sec:sys_model}
\begin{figure*}
\centering
\includegraphics[width=0.8\linewidth]{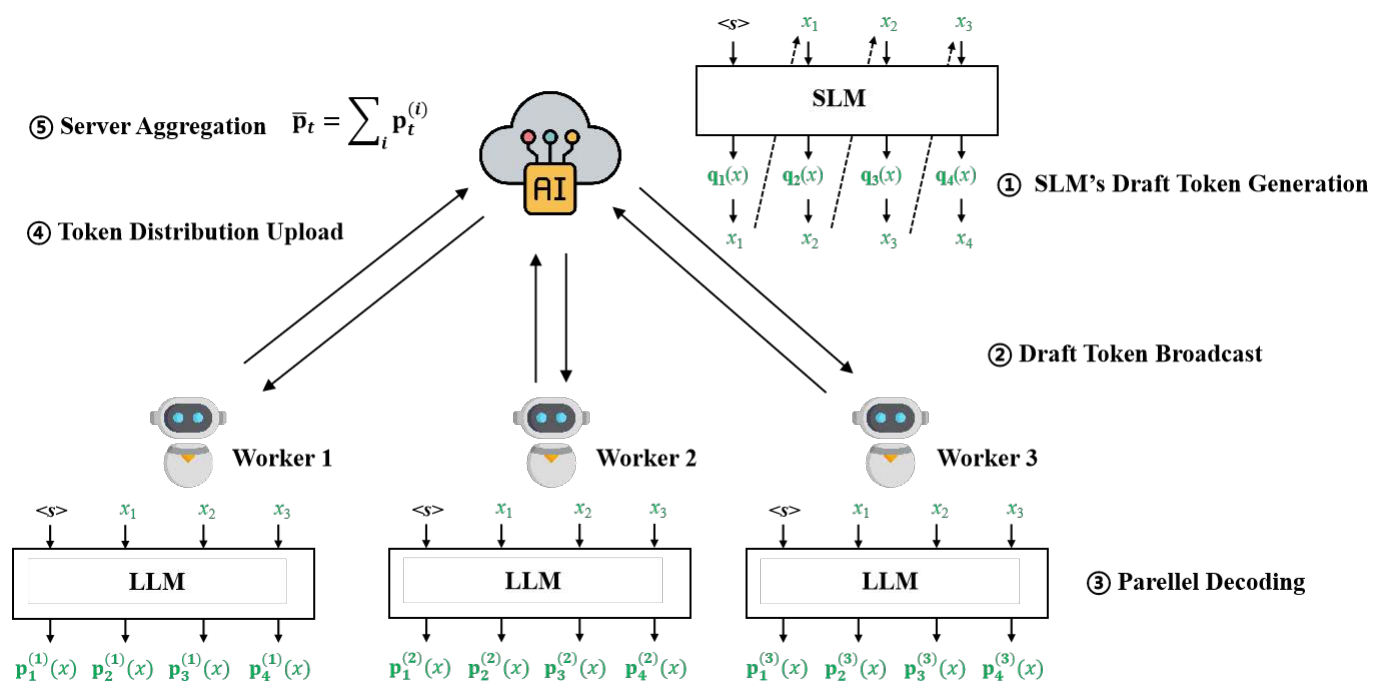}
\caption{An illustration of federated LLM inference with Speculative Decoding.}
 \label{fig:FedSpec}
\end{figure*}
\subsection{Federated LLM Inference with Speculative Decoding}
\label{sec:FI}
We consider a \textbf{federated LLM inference} framework consisting of a coordinating server and a set of distributed workers 
\(\mathcal{M} = \{1,2,\ldots,M\}\), where each worker hosts a local LLM. All LLMs share the same vocabulary $\mathcal{V}$, though the model sizes and architectures may be heterogeneous. 
The inference task follows an \textbf{autoregressive decoding} paradigm. At each step, the server sends the current prefix or newly generated token to all workers. Each worker performs a forward pass to compute a probability distribution over the vocabulary, and the server aggregates the distributions to generate the next token. Due to this sequential dependency, every generated token requires all workers to execute a forward pass and upload a full token distribution, causing frequent synchronization and significant communication overhead, especially for large vocabularies, which limits decoding throughput.

To mitigate this problem, we further deploy a small language model (SLM) at the server—also using the same vocabulary—and adopt the \textbf{speculative decoding} mechanism as illustrated in Fig.~\ref{fig:FedSpec}, which proceeds as follows.

\noindent \textbf{Step 1: SLM's Draft Token Generation.}
Given the prefix, the SLM autoregressively generates a draft token sequence of length $\gamma$, denoted by \( \mathbf{x} = [x_1, x_2, \dots, x_\gamma]\).
At each step $t = 1, \dots, \gamma$, the draft token is sampled as
\[
x_t \sim \mathbf{q}_t, \quad t = 1, \cdots, \gamma,
\] 
where \(\quad \mathbf{q}_t = \big[q_t(x)\big]_{x \in \mathcal{V}} \) denotes the output distribution of the SLM at step $t$.

\noindent \textbf{Step 2: Draft Token Broadcast.}
The draft sequence $\mathbf{x}$ is broadcast to all workers $i \in \mathcal{M}$

\noindent \textbf{Step 3: Parellel Decoding.}
Given the draft sequence, each worker computes the conditional probability distribution in parallel
\[
\mathbf{p}_{t,i} = \big[p_{t,i}(x)\big]_{x \in \mathcal{V}} \in \mathbb{R}^{|\mathcal{V}|}, \quad t=1,\cdots,\gamma+1,
\]
where $p_{t,i}(x)$ denotes the probability of generating token $x$ at step $t$ by worker $i$.

\noindent \textbf{Step 4: Token Distribution Upload.} 
Each worker uploads its output token distribution~\(\mathbf{p}_{t,i}\) to the server. 

\noindent \textbf{Step 5: Server Aggregation} 
The server aggregates the received distribution via a weighted average:
\begin{equation}
    \bar{\mathbf{p}}_t = \sum_{i \in \mathcal{M}} w_{i} \mathbf{p}_{t,i}
    = \big[\bar{p}_{t}(x)\big]_{x \in \mathcal{V}},
\end{equation}
where \(\bar{p}_t(x) = \sum_{i \in \mathcal{M}} w_{i} \, p_{t,i}(x).\)

\noindent \textbf{Step 6: Draft token Verification.}
The server verifies the draft token in two sub-steps: 
\textbf{\textit{a).~Accept/Reject:}} the draft token is accepted if $q_t(x_t)<\bar{p}_t(x_t)$, and otherwise rejected with probability ${\bar{p}_t(x_t)}/{q_t(x_t)}$. Hence, we have the expected acceptance rate as~\cite{leviathan2023fast}:
\begin{equation}
\label{eq:acceptance_rate}
\alpha_t \!=\! \mathbb{E}_{x\sim \mathbf{q}_t} \!\!\left[ \min\left( \frac{\bar{p}_t(x)}{{q_t(x)}}, 1\right) \!\right] \!=\! \sum_{x \in \mathcal{V}} \min\left( \bar{p}_t(x), {q_t(x)}\right).
\end{equation}

\noindent \textbf{\textit{b).~Resample:}}
If $x_t$ is rejected, a new token $x'_t$ is resampled from the \emph{residual distribution}
\begin{equation}
\label{eq:resample}
    r_t(x) = \frac{\max(0, \bar{p}_t(x)-q_t(x))}{\sum_{v \in \mathcal{V}} \max(0, \bar{p}_t(v)-q_t(v))}.
\end{equation}

If all draft tokens are accepted, the server samples an additional token $x'_{\gamma+1}\sim\bar{\mathbf{p}}_{\gamma+1}$; otherwise, at the first rejected step $t\le\gamma$, it samples $x'_t$ from the residual distribution in~\eqref{eq:resample} and discards $\{x_{t+1},\dots,x_\gamma\}$. The verified sequence $[x_1,\dots,x'_t]$, with $t=\gamma+1$ in the full-acceptance case, is appended to the prefix for the next decoding iteration.

\subsection{Communication Bottleneck}
While speculative decoding can reduce the number of full LLM forward passes required at each worker, each worker still needs to upload a probability distribution over all tokens for every draft token to the server. This incurs significant communication overhead, which grows with both the number of workers and the vocabulary size. To alleviate this issue, it is necessary to compress the transmitted distributions without severely degrading reconstruction quality.

\section{Compressed Transmission Scheme}
\label{sec:compressed_transmission}
To address the communication bottleneck described above, we adopt \textbf{top-$K$ truncation} for probability transmission. Specifically, at decoding step $t$, worker $i$ produces a probability distribution $\mathbf{p}_{t,i} = [p_{t,i}(x)]_{x \in \mathcal{V}}$. Let $\mathcal{V}^{(k_i)} \subset \mathcal{V}$ denote the set of top-$k_i$ tokens with the highest probabilities under $\mathbf{p}_{t,i}$.

\noindent\textbf{Compressed Transmission Scheme:}
Worker $i$ transmits $\mathcal{V}^{(k_i)}$ along with $\{p_{t,i}(x)\}_{x \in \mathcal{V}^{(k_i)}}$,
i.e., the identities of the top-$k_i$ tokens and their associated probabilities.

With a slight abuse of notation, we denote the reconstructed distribution under different reconstruction schemes by
$p^{(k_i)}_{t,i}(x)$.
To simplify notation, we further define
\begin{equation}
\label{eq:rho_epsilon}
\rho_{t,i} = \!\!\sum_{v \in \mathcal{V}^{(k_i)}} p_{t,i}(v), 
\qquad
\epsilon_{t,i} = \!\!\sum_{v \notin \mathcal{V}^{(k_i)}} p_{t,i}(v) = 1 - \rho_{t,i},
\end{equation}
where $\rho_{t,i}$ denotes the retained probability mass and $\epsilon_{t,i}$ is the residual mass after top-$k_i$ truncation.

As top-$K$ truncation removes some probability mass, the server cannot fully determine the original distribution. 
We therefore consider two reconstruction schemes to approximate the original distribution.

\subsection{Renormalized Top-$K$ Reconstruction}
In the first scheme, the server renormalizes the received probabilities over $\mathcal{V}^{(k_i)}$, while assigning zero probability to all remaining tokens.
The reconstructed distribution is given by
\begin{equation}
\label{eq:renormalize}
p^{(k_i)}_{t,i}(x) =
\begin{cases}
\dfrac{p_{t,i}(x)}{\rho_{t,i}}, & x \in \mathcal{V}^{(k_i)}, \\[8pt]
0, & x \notin \mathcal{V}^{(k_i)} .
\end{cases}
\end{equation}
This scheme preserves the relative probabilities among the top-$k_i$ tokens, but discards all probability mass outside $\mathcal{V}^{(k_i)}$.

\subsection{Residual-Mass Redistribution}
In the second scheme, the server preserves the original probabilities of the transmitted top-$k_i$ tokens and redistributes the remaining probability mass uniformly over the rest of the vocabulary.
The reconstructed distribution is defined as
\begin{equation}
\label{eq:residual_mass}
p^{(k_i)}_{t,i}(x) =
\begin{cases}
p_{t,i}(x), & x \in \mathcal{V}^{(k_i)}, \\[6pt]
\dfrac{\epsilon_{t,i}}{|\mathcal{V}| - k_i}, & x \notin \mathcal{V}^{(k_i)} .
\end{cases}
\end{equation}

This scheme preserves the original probabilities of the top-$k_i$ tokens while maintaining a valid probability distribution over the entire vocabulary.

After reconstructing the distributions from all workers, the server aggregates them to form a global token distribution.
Specifically, let $\mathbf{k} = [k_1,\ldots,k_M]$ denote the top-$K$ profile across workers.
At decoding step $t$, the aggregated distribution is computed as
\begin{equation}
\bar{\mathbf{p}}^{(\mathbf{k})}_t = \sum_{i \in \mathcal{M}} w_i \, \mathbf{p}^{(k_i)}_{t,i},
\end{equation}
where $w_i \ge 0$ and $\sum_{i \in \mathcal{M}} w_i = 1$.
The resulting distribution $\bar{\mathbf{p}}^{(\mathbf{k})}_t$ is then used for token sampling or draft-token verification.

\section{Theoretical Analysis}
\label{sec:theory}
In this section, we analyze the robustness of the proposed \emph{compressed transmission scheme} in federated speculative decoding.
Specifically, we characterize how top-$K$ compression and reconstruction affect (i) the distortion of local and aggregated token distributions, and (ii) the variation of the acceptance rate in speculative decoding.\footnote{Extending the analysis to end-to-end communication cost, decoding throughput, and final generation quality is left for future work. These metrics are fundamentally coupled with the token distribution and acceptance rate, and our results already provide indicative insights into their behavior.}

To this end, we first define error measures at both the local and aggregated levels, as well as the acceptance rate variation, and then derive corresponding bounds.

\begin{definition}[Local Reconstruction Error]
The local reconstruction error for worker $i$ at decoding step $t$ is
\begin{equation}
\Delta^{(k_i)}_{t,i} \triangleq \big\| \mathbf{p}^{(k_i)}_{t,i} - \mathbf{p}_{t,i} \big\|_1.
\end{equation}
\end{definition}

\begin{definition}[Aggregation Bias]
The aggregation bias induced by top-$K$ reconstruction is
\begin{equation}
\label{eq:BBias}
\Delta^{(\mathbf{k})}_t \triangleq  \big\| \bar{\mathbf{p}}^{(\mathbf{k})}_t - \bar{\mathbf{p}}_t \big\|_1.
\end{equation}
\end{definition}
\begin{definition}[Acceptance Rate Variation]
The acceptance rate variation at decoding step $t$ is defined as
\begin{equation}
\Delta \alpha_t^{(\mathbf{k})} \triangleq 
\left| \alpha_t^{(\mathbf{k})} - \alpha_t \right|.
\end{equation}
\end{definition}

\subsection{Local Reconstruction Error}
We first characterize  the error introduced by reconstructing each worker's top-$K$ compressed distribution and quantify the resulting $\ell_1$ reconstruction error in the following lemma.
\begin{lemma}[Local Top-$K$ Reconstruction Error Bound]
\label{lem:local_error}
For worker $i$ at decoding step $t$, the $\ell_1$ reconstruction error $\Delta^{(k_i)}_{t,i}$
satisfies
\begin{equation}
\Delta^{(k_i)}_{t,i} = 2\epsilon_{t,i}^{(k_i)}
\end{equation}
under the reconstructed distribution defined in
\eqref{eq:renormalize}, and
\begin{equation}
\Delta^{(k_i)}_{t,i} \le 2\epsilon_{t,i}^{(k_i)}
\end{equation}
under the reconstructed distribution defined in \eqref{eq:residual_mass}.
\end{lemma}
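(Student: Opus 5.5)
The plan is to split the $\ell_1$ norm $\Delta^{(k_i)}_{t,i}=\sum_{x\in\mathcal{V}}|p^{(k_i)}_{t,i}(x)-p_{t,i}(x)|$ into the sum over the retained set $\mathcal{V}^{(k_i)}$ and the sum over its complement. We then evaluate each part directly from the reconstruction formulas, writing $\epsilon_{t,i}^{(k_i)}$ for the residual mass $\epsilon_{t,i}$ of \eqref{eq:rho_epsilon} and $\rho_{t,i}=1-\epsilon_{t,i}$.

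\textbf{Renormalized scheme \eqref{eq:renormalize}.} On $\mathcal{V}^{(k_i)}$ the reconstructed value is $p_{t,i}(x)/\rho_{t,i}\ge p_{t,i}(x)$, since $\rho_{t,i}\le 1$. Each term is therefore $p_{t,i}(x)(1/\rho_{t,i}-1)$, and summing gives
\[
\rho_{t,i}\cdot\frac{1-\rho_{t,i}}{\rho_{t,i}}=\epsilon_{t,i}.
\]
Off $\mathcal{V}^{(k_i)}$ the reconstruction is $0$, so that part contributes exactly $\sum_{x\notin\mathcal{V}^{(k_i)}}p_{t,i}(x)=\epsilon_{t,i}$. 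Adding the two parts yields the equality $2\epsilon_{t,i}$. The only point to note is $\rho_{t,i}>0$, which holds because the top-$k_i$ set carries positive mass.

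\textbf{Residual-mass scheme \eqref{eq:residual_mass}.} On $\mathcal{V}^{(k_i)}$ the reconstruction equals $p_{t,i}(x)$, so that part contributes zero. On the complement, the triangle inequality gives
\[
\Big|\tfrac{\epsilon_{t,i}}{|\mathcal{V}|-k_i}-p_{t,i}(x)\Big|\le \tfrac{\epsilon_{t,i}}{|\mathcal{V}|-k_i}+p_{t,i}(x).
\]
Summing over the $|\mathcal{V}|-k_i$ tokens gives $\epsilon_{t,i}+\epsilon_{t,i}=2\epsilon_{t,i}$.

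This argument is elementary and has no real obstacle. The one step deserving care is the sign determination in the first scheme, which is what turns the bound into an equality. It is also worth remarking that the second bound is loose: the tail error there is really $\sum_{x\notin\mathcal{V}^{(k_i)}}|p_{t,i}(x)-\bar\epsilon|$ with $\bar\epsilon=\epsilon_{t,i}/(|\mathcal{V}|-k_i)$, which is strictly below $2\epsilon_{t,i}$ unless the tail is extremely concentrated. That observation is not needed for the inequality as stated.
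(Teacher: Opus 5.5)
Your proposal is correct and follows the paper's proof essentially step for step: you split the $\ell_1$ sum over $\mathcal{V}^{(k_i)}$ and its complement, evaluate the renormalized scheme exactly (your explicit sign check $p_{t,i}(x)/\rho_{t,i}\ge p_{t,i}(x)$ is the step the paper uses implicitly), and bound the residual-mass tail termwise by $|a-b|\le a+b$. One small correction to your closing aside: the tail error is at most $2\epsilon_{t,i}\bigl(1-1/(|\mathcal{V}|-k_i)\bigr)$, with the maximum reached when the whole tail mass sits on one token, so the inequality is strict whenever $\epsilon_{t,i}>0$, even for a fully concentrated tail.
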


\begin{proof}
By definition,
\begin{equation}
\Delta^{(k_i)}_{t,i}
= \sum_{x \in \mathcal{V}}
\big| p^{(k_i)}_{t,i}(x) - p_{t,i}(x) \big|.
\end{equation}
Splitting the vocabulary into $\mathcal{V}^{(k_i)}$ and its complement,
we obtain
\begin{align}
\Delta^{(k_i)}_{t,i}
=& \!\!\!\sum_{x \in \mathcal{V}^{(k_i)}}  \!\!\!\left|p^{(k_i)}_{t,i}(x) - p_{t,i}(x) \right| \!\!+ \!\!\!\! \sum_{x \notin \mathcal{V}^{(k_i)}} \!\!\!\left|p^{(k_i)}_{t,i}(x) - p_{t,i}(x) \right|.
\end{align}

\textit{\textbf{1). Renormalized Top-$K$ Reconstruction.}}
For this scheme, we have
\begin{align}
\Delta^{(k_i)}_{t,i}
\overset{(a)}{=}& \sum_{x \in \mathcal{V}^{(k_i)}} \left|\dfrac{p_{t,i}(x)}{\rho_{t,i}} - p_{t,i}(x) \right|
+ \sum_{x \notin \mathcal{V}^{(k_i)}} p_{t,i}(x) \notag\\
=& \frac{1-\rho_{t,i}}{\rho_{t,i}}
\sum_{x \in \mathcal{V}^{(k_i)}} p_{t,i}(x)
+ \sum_{x \notin \mathcal{V}^{(k_i)}} p_{t,i}(x) \notag\\
\overset{(b)}{=}& 2\epsilon_{t,i}^{(k_i)}.
\end{align}
where $(a)$ follows from \eqref{eq:renormalize} and $(b)$ follows from \eqref{eq:rho_epsilon}.

\textit{\textbf{2). Residual-Mass Redistribution.}}
For this scheme, we have
\begin{align}
\Delta^{(k_i)}_{t,i}
\overset{(c)}{=}& \sum_{x \notin \mathcal{V}^{(k_i)}} \left| \dfrac{\epsilon_{t,i}}{|\mathcal{V}| - k_i} - p_{t,i}(x) \right| \notag\\
\leq& \sum_{x \notin \mathcal{V}^{(k_i)}}
\left(
\frac{\epsilon_{t,i}}{|\mathcal{V}|-k_i}
+ p_{t,i}(x)
\right)
= 2\epsilon_{t,i}^{(k_i)}.
\end{align}
where $(c)$ follows from \eqref{eq:residual_mass}.

This completes the proof.
\end{proof}
As shown in Lemma~\ref{lem:local_error}, the local reconstruction error introduced by top-$K$ compression is bounded, indicating that the distortion of individual worker distributions is controlled.
\subsection{Aggregation Bias}
We now analyze the overall bias introduced when aggregating the reconstructed distributions from all workers. 
The following theorem provides an upper bound on the $\ell_1$ aggregation bias.
\begin{theorem}[Aggregation Bias Upper Bound]
\label{thm:aggregation_bias}
The aggregation bias $\Delta^{(\mathbf{k})}_t$ induced by top-$K$ compression is upper bounded as
\begin{equation}
\Delta^{(\mathbf{k})}_t
\leq
2 \sum_{i \in \mathcal{M}} w_i \epsilon_{t,i}^{(k_i)}.
\end{equation}
\end{theorem}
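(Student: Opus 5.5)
The plan is to reduce the aggregation bias to a weighted sum of local reconstruction errors and then apply Lemma~\ref{lem:local_error}. By the definitions of $\bar{\mathbf{p}}^{(\mathbf{k})}_t$ and $\bar{\mathbf{p}}_t$, both are convex combinations with the same weights $w_i$. Their difference is therefore
\begin{equation*}
\bar{\mathbf{p}}^{(\mathbf{k})}_t - \bar{\mathbf{p}}_t = \sum_{i \in \mathcal{M}} w_i \big( \mathbf{p}^{(k_i)}_{t,i} - \mathbf{p}_{t,i} \big).
\end{equation*}
Substituting this into \eqref{eq:BBias}, I will apply the triangle inequality for the $\ell_1$ norm. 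I will also use absolute homogeneity together with $w_i \ge 0$. Together these give
\begin{equation*}
\Delta^{(\mathbf{k})}_t \le \sum_{i \in \mathcal{M}} w_i \big\| \mathbf{p}^{(k_i)}_{t,i} - \mathbf{p}_{t,i} \big\|_1 = \sum_{i \in \mathcal{M}} w_i \Delta^{(k_i)}_{t,i}.
\end{equation*}

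Next I invoke Lemma~\ref{lem:local_error} for each worker. It gives $\Delta^{(k_i)}_{t,i} \le 2\epsilon^{(k_i)}_{t,i}$ under either reconstruction scheme: with equality for \eqref{eq:renormalize}, and as an inequality for \eqref{eq:residual_mass}. Plugging this into the weighted sum yields the claimed bound $2\sum_i w_i \epsilon^{(k_i)}_{t,i}$.

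The bound holds even if different workers use different schemes, since the lemma applies worker by worker. There is no real obstacle here. The only point to check is that the weights are nonnegative, so that pulling $w_i$ out of the norm does not require absolute values; this is given in the setup.
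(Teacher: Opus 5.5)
Your proposal is correct and matches the paper's proof: you write $\bar{\mathbf{p}}^{(\mathbf{k})}_t - \bar{\mathbf{p}}_t = \sum_{i} w_i(\mathbf{p}^{(k_i)}_{t,i} - \mathbf{p}_{t,i})$, apply the $\ell_1$ triangle inequality with $w_i \ge 0$, and then invoke Lemma~\ref{lem:local_error} worker by worker. Your remark that the bound still holds when different workers use different reconstruction schemes is valid, and the paper leaves it implicit.
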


\begin{proof}
Using the triangle inequality, we have
\begin{align}
\Delta^{(\mathbf{k})}_t
= \left\|
\sum_{i \in \mathcal{M}} w_i
\big( \mathbf{p}^{(k_i)}_{t,i} - \mathbf{p}_{t,i} \big)
\right\|_1 
\leq \sum_{i \in \mathcal{M}} w_i
\big\| \mathbf{p}^{(k_i)}_{t,i} - \mathbf{p}_{t,i} \big\|_1.
\end{align}
Applying Lemma~\ref{lem:local_error} completes the proof.
\end{proof}
Together with the local reconstruction bound, these results guarantee that top-$K$ truncation introduces a bounded distortion both at the individual worker level and at the aggregated distribution level.

\subsection{Acceptance Rate}
The acceptance rate plays a crucial role: a higher acceptance rate implies that more tokens are accepted, leading to fewer interactions between the SLM and LLMs, and consequently reducing the amount of data transmitted. 
\begin{theorem} [Bound on the acceptance rate variation]
\label{the:delta_bound}
Let $\mathbf{\alpha}_t^{(\mathbf{k})}$ denote the acceptance rate via top-$k$ compressed transmission. $\Delta \alpha$ is bounded by
\begin{equation}
\label{eq:acceptance_bound}
    \Delta \alpha_t^{(\mathbf{k})} \leq \frac{1}{2}\Delta^{(\mathbf{k})}_t \leq \sum_{i \in \mathcal{M}} w_i \epsilon_{t,i}^{(k_i)}
\end{equation}
\end{theorem}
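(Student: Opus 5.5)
The plan is to express the acceptance rate through total variation distance and then invoke Theorem~\ref{thm:aggregation_bias}. By \eqref{eq:acceptance_rate}, $\alpha_t = \sum_{x} \min(\bar{p}_t(x), q_t(x))$ and, analogously, $\alpha_t^{(\mathbf{k})} = \sum_x \min(\bar{p}^{(\mathbf{k})}_t(x), q_t(x))$. The key elementary fact I will use is that for fixed $q$, the map $a \mapsto \min(a,q)$ is $1$-Lipschitz. However, applying this pointwise only gives $\Delta\alpha_t^{(\mathbf{k})} \le \|\bar{\mathbf{p}}^{(\mathbf{k})}_t - \bar{\mathbf{p}}_t\|_1 = \Delta^{(\mathbf{k})}_t$, which loses the factor $\tfrac12$. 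So I will instead pass through the identity $\sum_x \min(p(x),q(x)) = 1 - \tfrac12\|\mathbf{p}-\mathbf{q}\|_1$, valid whenever $\mathbf{p}$ and $\mathbf{q}$ are both probability distributions. It follows from $\min(a,b) = \tfrac12(a+b-|a-b|)$ summed over $x$ together with $\sum_x p(x) = \sum_x q(x) = 1$.

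The key steps, in order, are as follows. First, I verify that $\bar{\mathbf{p}}^{(\mathbf{k})}_t$ is a probability distribution. Under \eqref{eq:renormalize} each $\mathbf{p}^{(k_i)}_{t,i}$ sums to $\rho_{t,i}/\rho_{t,i}=1$. Under \eqref{eq:residual_mass} it sums to $\rho_{t,i}+\epsilon_{t,i}=1$. Since $w_i\ge 0$ with $\sum_i w_i = 1$, the convex combination also sums to one. Second, I write
\[
\alpha_t^{(\mathbf{k})} - \alpha_t = \tfrac12\left(\|\bar{\mathbf{p}}_t - \mathbf{q}_t\|_1 - \|\bar{\mathbf{p}}^{(\mathbf{k})}_t - \mathbf{q}_t\|_1\right).
\]
Third, I apply the reverse triangle inequality to obtain $|\alpha_t^{(\mathbf{k})} - \alpha_t| \le \tfrac12\|\bar{\mathbf{p}}^{(\mathbf{k})}_t - \bar{\mathbf{p}}_t\|_1 = \tfrac12 \Delta^{(\mathbf{k})}_t$. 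Finally, the second inequality in \eqref{eq:acceptance_bound} follows immediately from Theorem~\ref{thm:aggregation_bias}, since $\tfrac12 \cdot 2\sum_i w_i\epsilon_{t,i}^{(k_i)} = \sum_i w_i \epsilon_{t,i}^{(k_i)}$.

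The main obstacle is obtaining the constant $\tfrac12$ rather than $1$. As noted above, the naive pointwise Lipschitz argument is insufficient, and the fix depends essentially on both aggregated distributions having unit mass. Hence the first step above, normalization under both reconstruction schemes, is the crux. An alternative route I could use is to split the vocabulary into the set where $\bar{p}^{(\mathbf{k})}_t > \bar{p}_t$ and its complement. Equal total mass forces the positive and negative parts of the difference to each equal $\tfrac12\Delta^{(\mathbf{k})}_t$. Each part then bounds the change of $\sum_x \min(\cdot, q_t(x))$ in one direction.
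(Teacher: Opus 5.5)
Your proposal is correct and follows essentially the same route as the paper. The paper also writes both acceptance rates as $1 - TV(\cdot,\cdot)$, though it cites Lemma~3.3 of Leviathan et al.\ rather than deriving this from $\min(a,b)=\tfrac12(a+b-|a-b|)$; it then uses the triangle inequality for total variation to get $\tfrac12\Delta^{(\mathbf{k})}_t$ and finishes with Theorem~\ref{thm:aggregation_bias}. Your explicit check that $\bar{\mathbf{p}}^{(\mathbf{k})}_t$ has unit mass under both reconstruction schemes is a step the paper's proof needs but uses silently, so it is a welcome addition rather than a deviation.
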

\begin{proof}
According to  Lemma~3.3 in~\cite{leviathan2023fast}, we have
$TV(\mathbf{q}_t,\bar{\mathbf{p}}_t)=1-\sum_{x\in\mathcal{V}} \min(q_t(x),\bar{p}_t(x))$, and $TV\left(\mathbf{q}_t,\bar{\mathbf{p}}_t^{(\mathbf{k})}\right)=1-\sum_{x\in\mathcal{V}} \min\left(q_t(x),\bar{p}^{(\mathbf{k})}_t(x)\right)$, where $TV(\cdot,\cdot)$ denotes the total variation distance.
Combining the above results with \eqref{eq:acceptance_rate}, we obtain
\begin{align}
    \Delta \alpha_t^{(\mathbf{k})} &= \left|  \sum_{x\in\mathcal{V}} \min \!\left(q_t(x),\bar{p}^{(\mathbf{k})}_t(x)\right)- \!\!\sum_{x\in\mathcal{V}} \min(q_t(x),\bar{p}_t(x)) \right|\notag\\
    &= \left|TV\left(\mathbf{q}_t,\bar{\mathbf{p}}_t^{(\mathbf{k})}\right) - TV(\mathbf{q}_t,\bar{\mathbf{p}}_t) \right| \notag\\
    &\overset{(a)}{\leq} TV \left(\bar{\mathbf{p}}_t^{(\mathbf{k})},\bar{\mathbf{p}}_t \right)=\frac{1}{2}\Delta^{(\mathbf{k})}_t \overset{(b)}{\leq} \sum_{i \in \mathcal{M}} w_i \epsilon_{t,i}^{(k_i)}.
\end{align}
Here $(a)$ follows from the triangle inequality of the total variation distance (Proposition~1 in~\cite{markatou2018non}), and $(b)$ follows from Theorem~\ref{the:delta_bound}
\end{proof}
Theorem~\ref{the:delta_bound} shows that the perturbation of the acceptance behavior caused by compressed transmission is also bounded, ensuring stable decoding efficiency.

\section{Experiment}
\label{sec:experiment}
In this section, we validate the compressed transsmission scheme and assess the effects of top-$K$ truncation on federated LLM inference.
Existing evaluations, which compare final task accuracy or output similarity under identical prompts suffer from two limitations:
First, token sampling introduces randomness, and the autoregressive nature of LLMs causes such randomness to propagate across generation steps, resulting in output variability even for identical inputs. 
Second, prior evaluations based on final task accuracy do not capture lossy inference bias at the token or distribution level, since generated outputs may be correct but not aligned with the LLM's output~\cite{bachmann2025judge}. Accordingly, we redesign the experiments to explicitly evaluate distribution-level distortions. 

\subsection{Experiment Setup}
We first perform federated LLM inference without compressed transmission to obtain the full token sequence together with the corresponding LLM vocabulary distributions at each generation step. Then we obtain the reconstructed distribution from \eqref{eq:renormalize} and \eqref{eq:residual_mass}. The bias is then computed according to \eqref{eq:BBias} and averaged across all steps and samples to quantify the overall deviation. 

Our federated LLM inference scenario involves two workers hosting heterogeneous LLMs: \texttt{LLaMA-7B} and \texttt{LLaMA-13B}, respectively. The coordinating server deploys the SLM, \texttt{LLaMA-68M}, for speculative decoding. To emulate the server–worker architecture, the SLM and LLMs are executed on NVIDIA A800 GPUs. All experiments are conducted on the \texttt{wmt14\_ende\_de} dataset~\cite{bojar-EtAl:2014:W14-33}, which consists of instruction-following examples widely used for LLM evaluation.
We further set the weights $w_1 = w_2 = 1/2$ and $k_1 = k_2$ for the ease of analysis.

\subsection{Simulation results}
Temperature controls the sharpness of the output vocabulary distribution by rescaling logits before softmax, redistributing probability mass while preserving token ranking~\cite{renze2024effect, li2025exploring}. Since vocabulary truncation operates on ranked token probabilities, temperature directly affects the residual mass beyond the top-$k$ tokens; thus, we vary the temperature to evaluate performance under different distribution concentration levels.

We compute the average aggregation bias under the two reconstruction strategies as
\begin{equation}
    \bar{\Delta} = \mathbb{E}[\Delta^{(\mathbf{k})}_t].
\end{equation}
Similarly, for the residual mass and acceptance rate bias, we have
\begin{equation}
    \bar{\epsilon} = \mathbb{E}\left[ \sum_{i \in \mathcal{M}} w_i \epsilon_{t,i}^{(k_i)} \right].
\end{equation}
\begin{equation}
    \Delta\bar{\alpha} = \mathbb{E}\left[ \Delta \alpha_t^{(\mathbf{k})} \right].
\end{equation}

\begin{figure*}[hbtp]
    \centering
    \begin{subfigure}{0.33\textwidth}
        \centering
        \includegraphics[width=\linewidth]{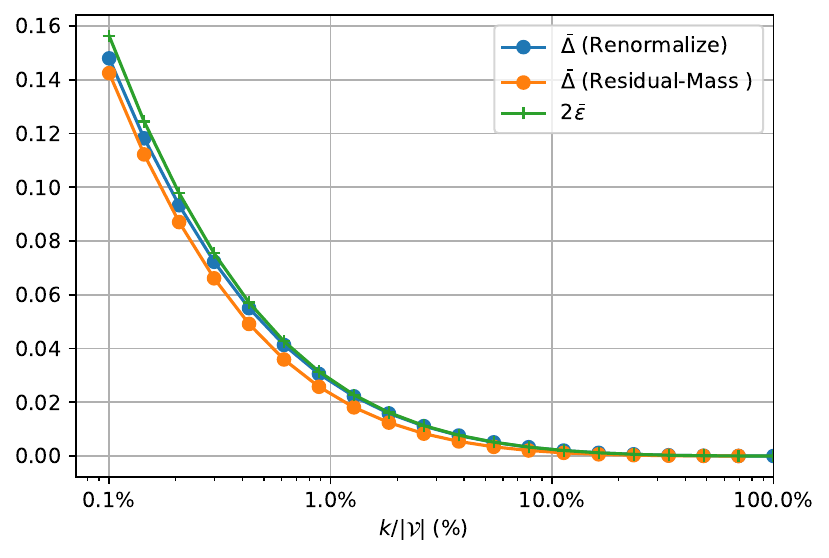}
        \caption{$T=0.8$}
        \label{fig:case1}
    \end{subfigure}\hfill
    \begin{subfigure}{0.33\textwidth}
        \centering
        \includegraphics[width=\linewidth]{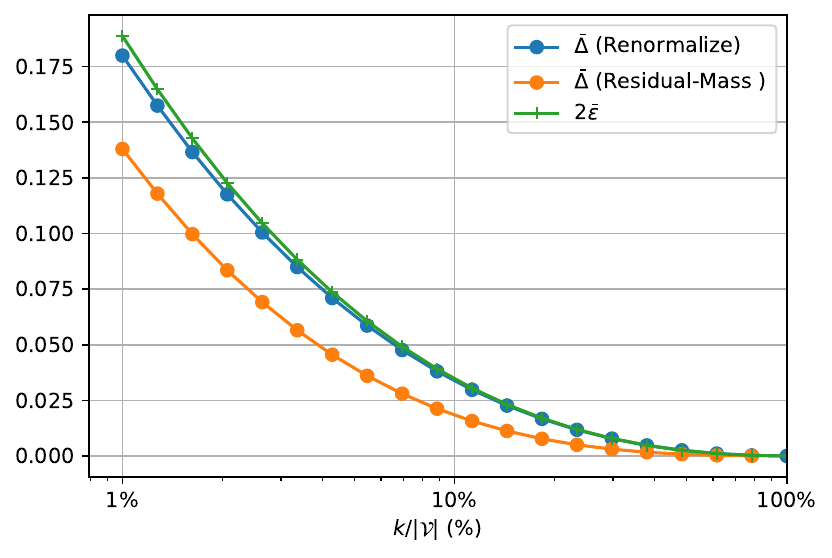}
        \caption{$T=1$}
        \label{fig:case2}
    \end{subfigure}\hfill
    \begin{subfigure}{0.33\textwidth}
        \centering
        \includegraphics[width=\linewidth]{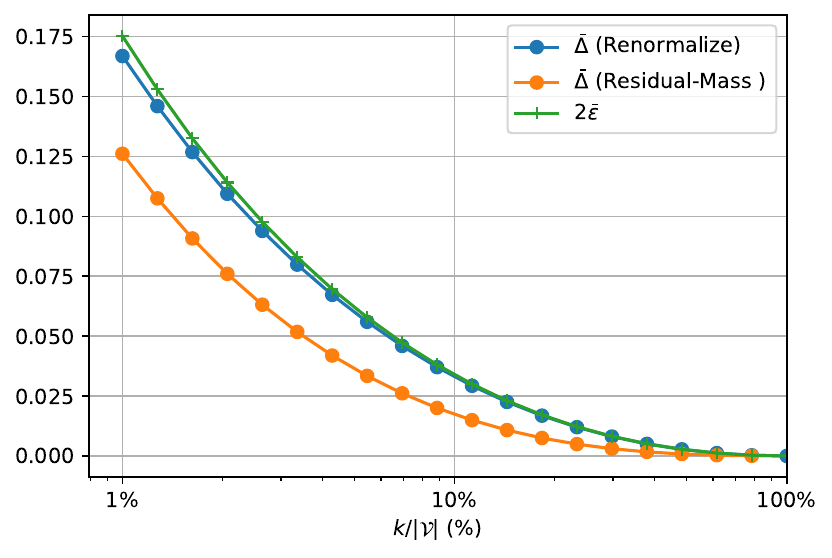}
        \caption{$T=1.2$}
        \label{fig:case3}
    \end{subfigure}
    \caption{Average aggregation bias $\bar{\Delta}$ under varying communication payloads ($K / |\mathcal{V}|$ in \%, with $|\mathcal{V}|=32,000$).}
    \label{fig:three_cases}
\end{figure*}

\begin{figure*}[hbtp]
    \centering
    \begin{subfigure}{0.33\textwidth}
        \centering
        \includegraphics[width=\linewidth]{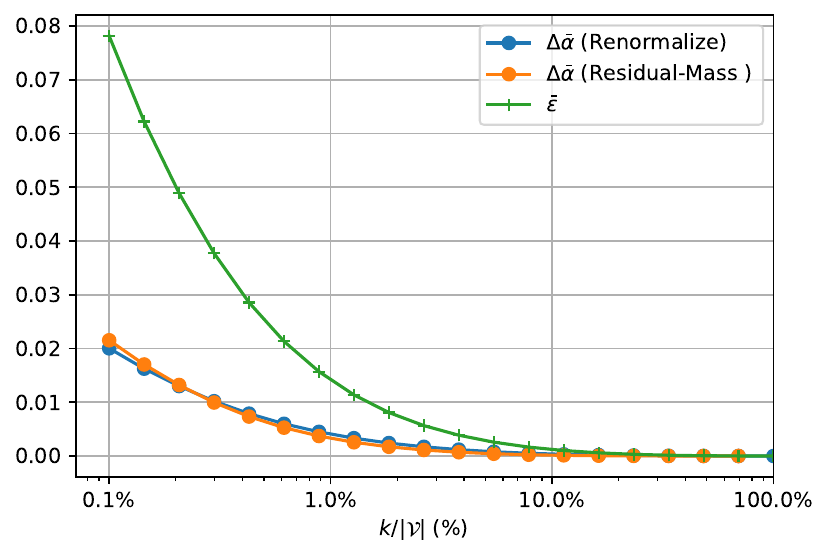}
        \caption{$T=0.8$}
        \label{fig:case1}
    \end{subfigure}\hfill
    \begin{subfigure}{0.33\textwidth}
        \centering
        \includegraphics[width=\linewidth]{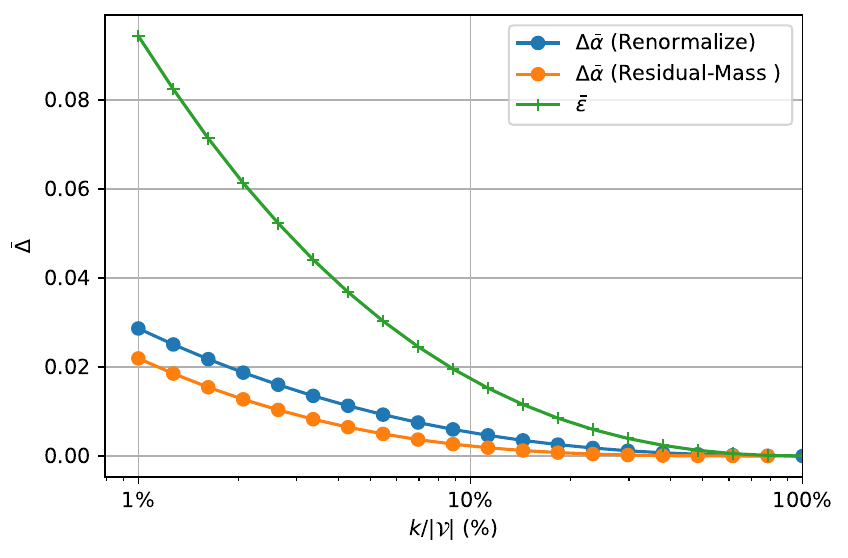}
        \caption{$T=1$}
        \label{fig:case2}
    \end{subfigure}\hfill
    \begin{subfigure}{0.33\textwidth}
        \centering
        \includegraphics[width=\linewidth]{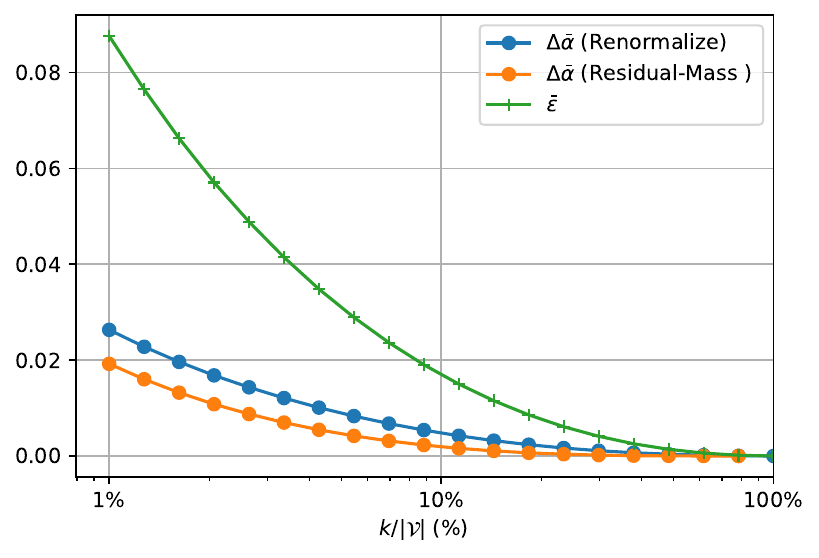}
        \caption{$T=1.2$}
        \label{fig:case3}
    \end{subfigure}
    \caption{Average acceptance rate bias $\Delta\bar{\alpha}$ under varying communication payloads ($K / |\mathcal{V}|$ in \%, with $|\mathcal{V}|=32,000$).}
    \label{fig:three_cases_acceptance}
\end{figure*}

The results are shown in Figs.~\ref{fig:three_cases} and~\ref{fig:three_cases_acceptance}. In Fig.~\ref{fig:three_cases}, we observe that for both reconstruction strategies, the average aggregation bias $\bar{\Delta}$ is bounded by $2\bar{\epsilon}$, confirming the validity of Theorem~\ref{thm:aggregation_bias}. Moreover, the \textit{Residual-Mass Redistribution} strategy consistently exhibits a smaller bias than the \textit{Renormalized Top-$K$ Reconstruction} strategy, which aligns with the intuition from Lemma~\ref{lem:local_error}: $\Delta^{(k_i)}{t,i} = 2 \epsilon{t,i}^{(k_i)}$ for the \textit{Residual-Mass Redistribution} strategy, whereas $\Delta^{(k_i)}{t,i} \le 2 \epsilon{t,i}^{(k_i)}$ for the \textit{Renormalized Top-$K$ Reconstruction} strategy.

In Fig.~\ref{fig:three_cases_acceptance}, we further see that $\Delta\bar{\alpha}$ is also bounded by $\bar{\epsilon}$. Note that this bound is not tight, and $\bar{\epsilon}$ under the \textit{Residual-Mass Redistribution} strategy is not necessarily smaller than that under the \textit{Renormalized Top-$K$ Reconstruction} strategy (e.g., for $K=320$, corresponding to only $0.1\%$ of the vocabulary). Nevertheless, these observations confirm the practicality of compressed transmission.

Furthermore, across different temperatures $T$, even under aggressive compression (e.g., $K=320$, i.e., only $1\%$ of the vocabulary), both $\bar{\Delta}$ and $\Delta\bar{\alpha}$ remain small, demonstrating that the proposed scheme achieves favorable communication--quality trade-offs.

\section{Conclusion}
\label{sec:conclusion}
In this paper, we studied federated LLM inference with speculative decoding, addressing the communication bottleneck inherent in distributed autoregressive generation. By leveraging a top-$K$ compressed transmission scheme combined with two server-side reconstruction strategies, our approach significantly reduces communication overhead while preserving generation fidelity. We theoretically analyzed the robustness of the method, deriving bounds on local reconstruction error, aggregation bias, and acceptance-rate bias, and validated these results through extensive experiments. The findings demonstrate that federated speculative decoding with compressed transmission is both effective and efficient, providing a practical solution for scalable edge deployment of LLMs.

\bibliographystyle{IEEEtran}
\bibliography{ref}

\vfill

\end{document}